\title{Team Correlated Equilibria in Zero-Sum Extensive-Form Games \\ via Tree Decompositions}
\author{
    Brian Hu Zhang,\textsuperscript{\rm 1}
    Tuomas Sandholm\textsuperscript{\rm 1,2,3,4}
}
\definecolor{green}{rgb}{0,0.5,0}
\newcommand{\delimit}[3]{\newcommand{#1}[1]{\left#2##1\right#3}}
\DeclareMathOperator*{\E}{\mathbb E}
\let\op\operatorname
\let\oldboxed\boxed
\renewcommand{\boxed}[1]{\oldboxed{#1}\,}
\newcommand{\zo}{\{0, 1\}}
\newcommand{\R}{\mathbb R}
\newcommand{\X}{{\mc X}}
\newcommand{\Y}{{\mc Y}}
\newcommand{\pmax}{\ensuremath{\oplus}\xspace}
\newcommand{\pmin}{\ensuremath{\ominus}\xspace}
\newcommand{\nature}{\textsc{Nature}\xspace}
\let\vec\boldsymbol
\newcommand{\team}{T}
\newcommand\cabove{\blacktriangle}
\newcommand\cbelow{\blacktriangledown}
\renewcommand\grad\nabla
\let\mc\mathcal
\numberwithin{equation}{section}
\newtheorem{theorem}[equation]{Theorem}
\newtheorem{proposition}[equation]{Proposition}
\newtheorem{corollary}[equation]{Corollary}
\theoremstyle{definition}
\newtheorem{definition}[equation]{Definition}
\setlist[enumerate,1]{label={(\arabic*)}}
\newcommand{\commentsymbol}{\it\color{gray}$\triangleright$~}
\algrenewcommand\algorithmiccomment[1]{\hfill{\commentsymbol#1}}
\let\c@algorithm\relax
\newaliascnt{algorithm}{equation}
\begin{document}

\maketitle

\begin{abstract}
Despite the many recent practical and theoretical breakthroughs in computational game theory, equilibrium finding in extensive-form team games remains a significant challenge. While {\sf NP}-hard in the worst case, there are provably efficient algorithms for certain families of team game. In particular, if the game has {\em common external information}, also known as {\em A-loss recall}---informally, actions played by non-team members (i.e., the opposing team or nature) are either unknown to the entire team, or common knowledge within the team---then polynomial-time algorithms exist~\cite{Kaneko95:Behavior}. In this paper, we devise a completely new algorithm for solving team games. It uses a tree decomposition of the constraint system representing each team's strategy to reduce the number and degree of constraints required for correctness (tightness of the mathematical program). Our approach has the bags of the tree decomposition correspond to team-public  states---that is, minimal sets of nodes  (that is, states of the team) such that, upon reaching the set, it is common knowledge  among the players on the team that the set has been reached. Our algorithm reduces the problem of solving team games to a linear program with at most $O(NW^{w+1})$ nonzero entries in the constraint matrix, where $N$ is the size of the game tree, $w$ is a parameter that depends on the amount of {\em uncommon} external information, and $W$ is the treewidth of the tree decomposition. In {\em public-action games}, our program size is bounded by the tighter $2^{O(nt)}N$ for teams of $n$ players with $t$ types each. Our algorithm is based on a new way to write a custom, concise tree decomposition, and its fast run time does not assume that the decomposition has small treewidth. Since our algorithm describes the polytope of correlated strategies directly, we get equilibrium finding in correlated strategies for free---instead of, say, having to run a double oracle algorithm. We show via experiments on a standard suite of games that our algorithm achieves state-of-the-art performance on all benchmark game classes except one. We also present, to our knowledge, the first experiments for this setting where both teams have more than one member.
\end{abstract}

\section{Introduction}
Computational game solving in imperfect-information games has led to many recent superhuman breakthroughs in AI~(e.g.,~\citealp{Bowling15:Heads,Brown18:Superhuman,Brown19:Superhuman}). Most of the literature on this topic focuses on {\em two-player zero-sum games with perfect recall}---that is, two players who {\em never forget any information during the game} face off in an adversarial manner. While this model is broad enough to encompass games such as poker, it breaks in the setting of {\em team games}, in which two {\em teams} of players face off against each other. While members of the team have perfect recall, the team as a whole may not, because different members of the team may know different pieces of information. Situations that fall into this category include recreational games such as contract bridge, Hanabi (in which there is no adversary), collusion in poker, and all sorts of real-world scenarios in which communication is limited. In this paper, we focus on such games.

We will assume that team members can coordinate before the game begins, including generating randomness that is shared within the team but hidden from the opposing team; in other words, they can {\em correlate} their strategies. Once the game begins, they can only exchange information by playing actions within the game. An alternative way of interpreting the setting is to consider zero-sum games with {\em imperfect recall} (e.g., \citealp{Kaneko95:Behavior}), that is, games in which players may forget information that they once knew. In this interpretation, each team is represented by a single player whose ``memory'' is continuously refreshed to that of the acting team member. The two interpretations are equivalent.

In general, computing equilibria in such games is {\sf NP}-hard~\cite{Chu01:NP}. However, some subfamilies of games are efficiently solvable. For example, if both teams have {\em common external information}, also known as {\em A-loss recall}~\cite{Kaneko95:Behavior} (in which all uncommon knowledge of the team can be attributed to team members not knowing about actions taken by other team members), or both teams have at most two players and the interaction between team members is {\em triangle-free}~\cite{Farina20:Polynomial} (which roughly means that the team's strategy tree can be re-ordered into a strategically equivalent tree obeying A-loss recall), then each player's strategy space can be described as the projection of a polytope with polynomially many constraints in the size of the game, and hence the game can be solved in polynomial time.

Practical algorithms for solving medium-sized instances of these games~\cite{Celli18:Computational,Farina18:ExAnte,Zhang20:Computing,Farina21:Connecting} primarily focused on the case in which there is a team of players playing against a single adversary. These algorithms are mostly based on {\em column generation}, or {\em single oracle}, and require a best-response oracle that is implemented in practice by an integer program. While they perform reasonably in practice, they lack theoretical guarantees on runtime. Although these techniques can be generalized naturally to the case of two teams using {\em double oracle}~\cite{McMahan03:Planning} in place of their column generation methods, we do not know of a paper that explores this more general case.

In this paper, we demonstrate a completely new approach to solving team games. From a game, we first construct a custom, concise {\em tree decomposition} (\citealp{Robertson86:Graph}; for a textbook description, see \citealp{Wainwright08:Graphical}) for the constraint system that defines the strategy polytope of each player. Then, we bound the number of feasible solutions generated in each tree decomposition node, from which we derive a bound on the size of the representation of the whole strategy space. Our bound is linear in the size of the game tree, and exponential in a natural parameter $w$ that measures the amount of uncommon external information. We also show a tighter bound in games with public actions (such as poker). 
Since our algorithm describes the polytope of correlated strategies directly, we get equilibrium finding in correlated strategies for free---instead of, say, having to run a double oracle algorithm.  %
We show via experiments on a standard suite of games that our algorithm is state of the art in most game instances, with failure modes predicted by the theoretical bound. 
We also present, to our knowledge, the first experiments for the setting where both teams have more than one member. 

Some papers~(e.g.,~\citealp{Daskalakis06:Computing}) have explored the use of tree decompositions to solve {\em graphical games}, which are general-sum, normal-form games in which the interactions between {\em players} are described by a graph. Our setting, and thus the techniques required, are completely different from this line of work.

\section{Preliminaries}
In this section, we will introduce background information about extensive-form games and tree decompositions.
\subsection{Extensive-Form Games}
\begin{definition}
A {\em zero-sum extensive-form team game} (EFG) $\Gamma$, hereafter simply {\em game}, between two teams $\pmax$ and $\pmin$ consists of the following: 
\begin{enumerate}
    \item A finite set $H$, of size $\abs{H} := N$, of {\em histories} of vertices of a tree rooted at some {\em initial state} $\textsc{Root} \in H$. The set of leaves, or {\em terminal states}, in $H$ will be denoted $Z$. The edges connecting any node $h \in H$ to its children are labeled with {\em actions}. The child node created by following action $a$ at node $h$ will be denoted $ha$. 
    \item A map $P : (H \setminus Z) \to \{\pmax, \pmin, \nature\}$, where $P(h)$ is the team who acts at node $h$. A node at which team $\team$ acts is called an {\em $\team$-node}.
    \item A {\em utility function} $u : Z \to \R$.
    \item For each team $\team \in \{\pmax, \pmin\}$, a partition $\mc I_\team$ of  the set of $\team$-nodes into \textit{information sets}, or {\em infosets}. In each infoset $I \in \mc I_\team$, every pair of nodes $h, h' \in I$ must have the same set of actions. 
    \item For each nature node $h$, a distribution $p(\cdot |h)$ over the actions available to nature at node $h$.
\end{enumerate} 
\end{definition}
It will sometimes be convenient to discuss the individual {\em players} on a team. In this context, for each team $\team \in \{\pmax, \pmin\}$, we will assume that $\team$ itself is a set of distinct players, and there is a map $P_\team : \mc I_\team \to \team$ denoting which member of the team plays at each infoset $I \in \mc I_\team$. 

We will use the following notational conventions: $A(h)$ or $A(I)$ denotes the set of available actions at a node $h$ or infoset $I$. $\preceq$ denotes the partial order created by the tree: if $h, h'$ are nodes, infosets or sequences, $h \preceq h'$ means $h$ is an ancestor of $h'$ or $h' = h$. If $S$ is a set of nodes, $h \succeq S$ (resp. $h \preceq S$) means $h \succeq h'$ (resp. $h \preceq h'$) for some $h' \in S$. If $I$ is an infoset and $a$ is an action at $I$, then $Ia = \{ ha : h \in I \}$. $\land$ denotes the lowest common ancestor relation: $h \land h'$ is the deepest node $h^*$ of the tree for which $h^* \preceq h, h'$. At a given node $h$, the {\em sequence} $s_i(h)$ for a team or player $i$ is the list of infosets reached and actions played by $i$ up to node $h$, including the infoset at $h$ itself if $i$ plays at $h$. %

We will assume that each individual player on a team has perfect recall, that is, for each player $i \in \team$, each infoset $I$ with $P_\team(I) = i$, and each pair of nodes $h, h' \in I$, we must have $s_i(h) = s_i(h')$. Of course, the team as a whole may not have perfect recall.
We will also assume that the game is {\em timeable}, {\em i.e.,}, every node in a given infoset is at the same depth of the game tree. While this assumption is not without loss of generality, most practical games satisfy it.

A {\em pure strategy} $\sigma$ for a team $\team \in \{\pmax, \pmin\}$ is a selection of one action from the action space $A(I)$ at every infoset $I \in \mc I_\team$.  The {\em realization plan}~\cite{Farina18:ExAnte}, or simply {\em plan}, $\vec x_\sigma$ corresponding to $\sigma$ is the vector $\vec x^\sigma \in \zo^H$ where $x^\sigma_h = 1$ if $\sigma(I) = a$ for every $Ia \preceq h$. A {\em correlated strategy} $\tilde \sigma$ is a distribution over pure strategies. The plan $\vec x$ corresponding to $\tilde \sigma$ is the vector $\vec x \in  [0, 1]^{H}$ where $x_h = \E_{\sigma \sim \tilde \sigma} x^\sigma_h$. The spaces of plans for teams $\pmax$ and $\pmin$ will be denoted $\X$ and $\Y$ respectively.  Both $\X$ and $\Y$ are compact, convex polytopes.

A {\em strategy profile} is a pair $(\vec x, \vec y) \in \X \times \Y$. The {\em value} of the strategy profile $(\vec x, \vec y)$ for \pmax is $u(\vec x, \vec y) := \sum_{z \in Z} u_z p_z x_z y_z$, where $p_z$ is the probability that nature plays all the actions needed to reach $z$: $p_z := \prod_{ha \preceq z : P(h) = \nature} p(a|h)$. Since the game is zero-sum, the payoff for \pmin is $-u(\vec x, \vec y)$. The {\em best response value} for \pmin (resp. \pmax) to a strategy $\vec x \in \X$ (resp. $\vec y \in \Y$) is $u^*(\vec x) := \min_{\vec y \in \Y} u(\vec x, \vec y)$ (resp. $u^*(\vec y) := \max_{\vec x \in \X} u(\vec x, \vec y)$).  A strategy profile $(\vec x, \vec y)$ is a {\em team correlated equilibrium}, or simply {\em equilibrium}, if $u^*(\vec x) = u(\vec x,\vec y) = u^*(\vec y)$. Every equilibrium of a given game has the same value, which we call the {\em value of the game}.
Equilibria in zero-sum team games can be computed by solving the bilinear saddle-point problem
\begin{align}\label{eq:bspp}
    \max_{\vec x \in \X} \min_{\vec y \in \Y} u(\vec x, \vec y),
\end{align}
where $u(\vec x, \vec y)$ can be expressed as a bilinear form $\ev{\vec x, \vec A\vec y}$ in which $\vec A$ has $O(N)$ nonzero entries.

One may wonder why we insist on allowing players to correlate. Indeed, alternatively, we could have defined {\em uncorrelated} strategies and equilibria, in which the distribution over pure strategies of each team is forced to be a product distribution over the strategy spaces of each player. However, in this case, the strategy space for both players becomes nonconvex, and therefore we may not even have equilibria at all! Indeed, if $\tilde \X$ and $\tilde \Y$ are the spaces of plans for uncorrelated strategies for each team, Theorem~7 in \citet{Basilico17:Team} implies that it is possible for
\begin{align}
    \max_{\vec x \in \tilde \X} \min_{\vec y \in \tilde \Y} u(\vec x, \vec y) \ne \min_{\vec x \in \tilde \Y} \max_{\vec x \in \tilde \X} u(\vec x,\vec y),
\end{align}
which makes it difficult to {\em define} the problem, much less solve it. Some authors (e.g.,~\citealp{Basilico17:Team}) have defined ``team maxmin equilibria'' in these games for the case where there is one opponent, i.e., $\abs{\pmin}=1$, by assuming that the team \pmax plays first. In the present paper, we study zero-sum games between two {\em teams}---as such, due to symmetry between the teams, we have no reason to favor one team over the other, and thus cannot make such an assumption.

If $\X$ and $\Y$ can be described succinctly by linear constraints, the problem~\eqref{eq:bspp} can be solved by taking the dual of the inner minimization and solving the resulting linear program~\cite{Luce57:Games}. Unfortunately, such a description cannot exist in the general case unless {\sf P} = {\sf NP}:
\begin{theorem}[\citealp{Chu01:NP}]\label{th:hardness}
Given a zero-sum team game and a threshold value $u^*$, determining whether the game's value is at least $u^*$ is {\sf NP}-hard, even when \pmax has two players and there is no adversary.
\end{theorem}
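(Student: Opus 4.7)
The plan is to prove \Cref{th:hardness} by polynomial-time reduction from 3SAT. Given a 3SAT instance $\phi$ with variables $x_1,\ldots,x_n$ and clauses $C_1,\ldots,C_m$, I would construct a team game $\Gamma_\phi$ with \pmax consisting of two players $P_1,P_2$, no adversary (so \pmin is a singleton that never acts), and a polynomial-time computable threshold $u^\star$ such that the value of $\Gamma_\phi$ is at least $u^\star$ if and only if $\phi$ is satisfiable.

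First I would describe the game. Nature picks a clause $C_j$ uniformly at random; then, independently, nature picks one variable occurring in $C_j$ and privately reveals only its \emph{index} to $P_1$, and likewise picks a variable from $C_j$ and privately reveals only its index to $P_2$. Each of $P_1,P_2$ then announces a truth value in $\{0,1\}$ for the variable they were shown. The team receives payoff $1$ if the announced literal assignments are consistent with satisfying $C_j$ and $0$ otherwise. By construction, each individual player has perfect recall (each play shows them at most one variable, and they see only its name), but the team as a whole has imperfect recall: $P_2$'s observation is hidden from $P_1$ and vice versa, so they cannot infer which clause nature selected.

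For soundness, if $\tau$ satisfies $\phi$, then the deterministic strategy in which both players play $x \mapsto \tau(x)$ whenever queried on variable $x$ yields payoff $1$ on every terminal. For the hardness direction, I would first reduce to pure team strategies: a correlated strategy is a distribution over pairs of deterministic maps $f_1,f_2 : [n] \to \{0,1\}$, and by linearity of expectation it suffices to upper-bound the team payoff achieved by any such pair. Either $f_1 \equiv f_2 = f$, in which case $f$ is a global truth assignment and, by unsatisfiability of $\phi$, leaves some clause unsatisfied---causing failure with probability at least $1/m$ over nature's draw of $C_j$ and the two variables---or $f_1$ and $f_2$ disagree on some variable that occurs in some clause, which again causes a positive-probability failure. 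In either case one obtains a gap of $1/\poly(n,m)$ between the satisfiable and unsatisfiable cases, so choosing $u^\star$ in the middle of the gap completes the reduction.

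The main obstacle is the quantitative analysis for the completeness direction: I must calibrate the reward structure and nature's distribution so that (i) shared randomness between $P_1$ and $P_2$ cannot be exploited to beat the threshold---i.e., disagreement between $f_1,f_2$ is strictly costly and not compensated by improved behavior elsewhere---and (ii) the induced gap is large enough to yield an efficiently representable threshold $u^\star$ (rational with polynomially many bits). Both amount to careful bookkeeping once the reduction is in place, but getting the game's payoff structure exactly right so that the pointwise optimum is a coordinated global assignment is the delicate step.
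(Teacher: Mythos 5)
There is a genuine gap in your reduction: in your game neither player is ever told the clause, and this makes it impossible to define the winning predicate so that the reduction has both completeness and soundness. A pure team strategy is just a pair of global truth assignments $(f_1, f_2)$, and the only clause-dependence is in nature's check. If the check is ``the two announced values can be extended to an assignment satisfying $C_j$,'' it is vacuous---any assignment to at most two of the three variables of a $3$-clause extends to a satisfying one---so the team wins with probability $1$ even when $\phi$ is unsatisfiable. If instead the check is ``at least one announced value satisfies its literal in $C_j$,'' then completeness fails: even under a satisfying assignment $\tau$, nature may show both players variables whose literals $\tau$ leaves false in $C_j$, so the value on satisfiable instances is strictly below $1$, and quantifying the gap between the satisfiable and unsatisfiable cases becomes a PCP-style inapproximability question rather than the ``careful bookkeeping'' you defer to. Your closing paragraph flags exactly this difficulty but does not resolve it.

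The fix is the asymmetric construction the paper uses: nature draws a clause uniformly and tells it to P1; P1 \emph{chooses} a variable of that clause; P2 is told only the chosen variable (not the clause) and assigns it a truth value; the team wins iff P2's value satisfies that variable's literal in the clause. Because P2 never sees the clause, P2's pure strategy is a single global assignment $f$; because P1 does see the clause, P1 can route each clause to a variable whose literal $f$ satisfies whenever one exists. Hence the value is exactly $1$ iff $\phi$ is satisfiable, while for unsatisfiable $\phi$ every pure profile loses on some clause and so achieves at most $1 - 1/m$; linearity of the payoff in the correlation distribution extends this bound to correlated strategies, and any threshold strictly between $1 - 1/m$ and $1$ (the paper takes $1 - 1/2m$) separates the two cases.
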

For completeness, we include proofs of results in this section in the appendix.
Despite the hardness result, it is sometimes possible to express $\X$ and $\Y$ using polynomially many constraints, and hence solve~\eqref{eq:bspp} efficiently. One of these cases is the following:
\begin{definition}
A team $\team$ in a game has {\em common external information}, also known as {\em A-loss recall}~\cite{Kaneko95:Behavior}, if, for any two nodes $h, h'$ in the same infoset $I$ of team $T$, either $h_\team = h'_\team$, or there exists an infoset $I'$ and two actions $a \ne a'$ at $I'$ such that $h \succeq I'a$ and $h' \succeq I'a'$. 
\end{definition}
Informally, if a team has common external information, each member of that team has the same knowledge about actions taken by non-team members (i.e., the opponent and nature). In this case, the {\em perfect-recall refinement} of that team's strategy space, which is created by splitting infosets to achieve perfect recall, is strategically equivalent with respect to correlation plans. Thus, equilibria can be found efficiently when both teams have it.

In this paper, we expand and refine these complexity results. In particular, we show that the polytope of plans for a team can be expressed as a projection of a polytope with $O(NW^{w+1})$ constraints and variables, where $w$ is a parameter describing the amount of information {\em generated by non-team members} that is known by {\em at least one member, but not all members}, of the team, and $W$ is the treewidth of a particular tree decomposition. In particular, $w=1$ for games with common external information, so our result is equivalent to the known result in that case.

\subsection{Tree Decompositions, Integer Hulls, and Dependency Graphs}\label{se:dependency-graphs}
We now review tree decompositions and their use in parameterizing integer hulls. We refer the reader to~\citet{Wainwright08:Graphical} for further reading on these topics.
\begin{definition}
Let $G$ be a graph. A {\em tree decomposition}, also known as a {\em junction tree} or {\em clique tree}, is a tree $\mc J$, whose vertices are subsets of vertices of $G$, called {\em bags}, such that
\begin{enumerate}
\item for every edge $(u, v)$ in $G$, there is a bag in $\mc J$ containing both $u$ and $v$, and
\item for every vertex\footnote{We will use $u \in G$ to mean $u$ is a vertex in $G$.} $u \in G$, the subset of bags $\{ U \in \mc J : u \in U \}$ is nonempty and connected in $\mc J$.
\end{enumerate} 
\end{definition}
The {\em width} of $\mc J$ is the size of the largest bag, minus one.\footnote{The $-1$ is so that trees have treewidth $1$.}
Now, consider a convex set of the form\footnote{$\op{conv}$ denotes the convex hull operator. $[m] = \{1, \dots, m\}$.}
\begin{gather}
\begin{aligned}
    {\mc X} &= \op{conv}(X), \qq{where}
    \\ X &= \qty{\vec x \in \zo^n : g_k(\vec x) = 0~~\forall k \in [m]} 
\end{aligned}\label{eq:int-hull}
\end{gather}
and the constraints $g_k : \zo^n \to \R$ are arbitrary polynomials. This formulation is very expressive; for example, it is possible to express realization plan polytopes in this form where $X$ is the set of pure plans, as we will show later. For notation, in the rest of the paper, if $\vec x \in \R^n$ and $U \subseteq [n]$, we will use $\vec x_U$ to denote the subvector of $\vec x$ on the indices in $U$.

\begin{definition}
The {\em dependency graph} of $\mc X$ is the graph whose vertices are the indices $i \in [n]$, and for which an edge connects $i, j \in [n]$ if there is a constraint $g_k$ in which both variables $x_i$ and $x_j$ appear.
\end{definition}

\begin{definition}
Let $U \subseteq [n]$, and $\tilde{\vec x} \in \{0, 1\}^U$. We say that $\tilde{\vec x}$ is {\em locally feasible} on $U$ if $\tilde{\vec x} = \vec x_U$ for some $\vec x \in X$. 
\end{definition}
We will use $X_U$ to denote the set of locally feasible vectors on $U$. Of course, $X_{[n]} = X$. The following result follows from the junction tree theorem (e.g.,~\citealt{Wainwright08:Graphical}):

\begin{proposition}\label{le:jtt-csp}%
Let $\mc J$ be a tree decomposition of the dependency graph of $\mc X$. Then $\vec x \in \mc X$ if and only if there exist vectors\footnote{$\Delta^S$ is the set of distributions on set $S$} $\vec\lambda^U \in \Delta^{X_U}$ for each $U \in \mc J$ satisfying
{\rm
\newcommand{\linetext}[1]{\parbox[t]{10em}{#1}}
\begin{align}
    \vec x_U &= \sum_{\tilde{\vec x} \in X_U} \lambda^U_{\tilde{\vec x}} \tilde{\vec x} &&\linetext{$\forall$ bags $U \in \mc J$} \raisetag{15pt}\stepcounter{equation}\tag{\theequation}\label{eq:generic-tree-decomposition-constraints}\\
    \sum_{\substack{\tilde{\vec x} \in X_U : \\ \tilde{\vec x}_{U \cap V} = \tilde{\vec x}^*}} \lambda^U_{\tilde{\vec x}} &= \sum_{\substack{\tilde{\vec x} \in X_V : \\ \tilde{\vec x}_{U \cap V} = \tilde{\vec x}^*}} \lambda^V_{\tilde{\vec x}} &&\linetext{$\forall$ edges $(U, V)$ in $\mc J$, \\\phantom\quad locally feasible  \\\phantom\quad\quad $\tilde{\vec x}^* \in X_{U \cap V}$.}
\end{align}
}%
The constraint system has size\footnote{Throughout this paper, {\em size} of a program refers to the number of nonzero entries in its constraint matrix.} $O((W+D) \sum_{U \in \mc J} \abs{X_U})$, where $W$ is the treewidth of $\mc J$ and $D$ is the maximum degree of any bag in $J$.
\end{proposition}
\Cref{le:jtt-csp} establishes a method of parameterizing convex combinations $\vec x$ over a set $X \subseteq \{0,1\}^n$. First, write down a tree decomposition of the constraint system defining $X$. Then, for each bag $U$ in the tree decomposition, parameterize a {\em local distribution}  over the set of locally feasible solutions $X_U$, and insist that, for adjacent bags $U, V$ in $\mc J$, the distributions $\vec\lambda^U$ and $\vec\lambda^V$ agree on the marginal on $U \cap V$. Thus, given an arbitrary set $\mc X$ of the given form, to construct a small constraint system that defines $\mc X$, it suffices to construct a tree decomposition of its constraint graph.

In most use cases of tree decompositions, the next step is to bound the treewidth, and then use the bound $\abs{X_U} \le 2^{W+1}$ to derive the size of the constraint system \eqref{eq:generic-tree-decomposition-constraints}. In our domain, it will turn out that $W$ can be too large to be useful; hence, we will instead {\em directly} bound $\abs{X_U}$.

\section{Tree Decompositions for Team Games}

We now move to presenting our results. 
We take the perspective of the maximizing team $\pmax$. All of the results generalize directly to the opposing team, $\pmin$, by swapping \pmax for \pmin and $X, \X$ for $Y, \Y$. 

Our approach has the bags of the tree decomposition  correspond  to team-public  states---that  is,  minimal sets of  nodes  (that is, states of the team) such  that,  upon reaching the set,  it  is  common  knowledge  among  the  players  on  the  team  that  the set has  been reached.  This is similar, but not identical, to the notion of public tree, which instead decomposes based on knowledge common to all players, including opponents. Next we will present our approach formally.

We will assume that no two nodes $h \ne h'$ at the same depth of the tree have ${\sf seq}_\pmax(h) = {\sf seq}_\pmax(h')$. That is, there is no information known by no members of the team. We will also assume that the information partition of \pmax has been {\em completely inflated}~\cite{Kaneko95:Behavior}---in particular, we will assume that, if \pmax has common external information, then \pmax has perfect recall.
These assumptions can be satisfied without loss of generality by merging nodes with the same sequence in the game tree, and performing inflation operations as necessary, before the tree decomposition.

Before defining the tree decomposition, we first need to define a notion of {\em public state}. Let $\sim$ be the following relation on pairs of nonterminal nodes: $h_1 \sim h_2$ if $h_1$ and $h_2$ are at the same depth {\em and} there is an infoset $I \in \mc I_\pmax$ with $h_1, h_2 \preceq I$. Let $\mc J^{\cabove}_\pmax$ be the set of equivalence classes of the transitive closure of $\sim$. The elements of $\mc J^{\cabove}_\pmax$ are the {\em public states} of the team:
each public state is a minimal set of nodes such that, upon reaching that set, it is common knowledge among the team that the team's true node lies within that set.
That is, if the true history of the team is in some $U^\cabove \in \mc J_\pmax^{\cabove}$, then this fact is common knowledge among the team. 
Our notion differs from the usual notion (e.g.,~\citealt{Kovavrik21:Rethinking}), in two ways. First, our public states here are {\em restricted to the team}: the fact that a team is in a public state may not be known or common knowledge among members of the opposing team. Second, our public states are constructed from the game tree rather than supplied as part of the game description---as such, they may capture common knowledge that is not captured by public observations.

The space of feasible plans can be written as $\mc X = \op{conv}(X)$, where $X$ is the set of pure plans $\vec x \in \{0, 1\}^{H}$, that is, plans satisfying:
{
\newcommand{\linetext}[1]{\parbox[t]{9em}{#1}}
\begin{gather}
\begin{aligned}
&x_h = \sum_{a \in A(h)} x_{ha}
&&\linetext{$\forall$ \pmax-nodes $h$}
\\ &x_h = x_{ha}
&&\linetext{$\forall$ non-\pmax-nodes $h$,
\\\phantom\quad actions $a \in A(h)$}
\\&x_{ha} x_{h'} = x_{h'a} x_h 
&&\linetext{$\forall$ infosets $I \in \mc I_\pmax$, 
\\\phantom\quad nodes $h, h' \in I$,
\\\phantom\quad actions $a \in A(I)$} 
\end{aligned}\label{eq:team-opt}
\end{gather}
}

We will never explicitly write a program using this constraint system; the only purpose of defining it is to be able to apply \Cref{le:jtt-csp} to the resulting tree decomposition.

We now construct a tree decomposition of the dependency graph of~\eqref{eq:team-opt}. 
For a public state $U^\cabove \in \mc J^{\cabove}_\pmax$, let $U^\cbelow$ be the set of all children of nodes in $U^\cabove$, that is, $U^\cbelow := \{ ha : h \in U, a \in A(h)\}$. Let $U = U^\cabove \cup U^\cbelow$.

\begin{figure}[tb]
    \centering
    \begin{tikzpicture}[
    level 1+/.style={level distance=20pt},
   every tree node/.style = {shape=rectangle, draw, fill=black!25},
    every path/.style={-, thick},
    nature/.style={fill=none},
    baseline={(current bounding box.center)},
    ]
    \small
    \Tree[
         .\node[nature](1){A}; 
        \edge []; [.\node[](2){B};
            \edge []; [.\node[](4){D};
                \edge []; \node[nature](8){H};
                \edge []; \node[nature](9){I};
            ]
            \edge []; [.\node[](5){E};
                \edge []; \node[nature](10){J};
                \edge []; \node[nature](11){K};
            ]
        ]
        \edge []; [.\node[](3){C};
            \edge []; [.\node[](6){F};
                \edge []; \node[nature](12){L};
                \edge []; \node[nature](13){M};
            ]
            \edge []; \node[nature](7){G};
        ]
    ]
    \draw[dotted, ultra thick, -] (5) -- (6);
    \end{tikzpicture}
    \qq{$\longrightarrow$}
    \begin{tikzpicture}[
    level 1/.style={level distance=22pt},
    level 2+/.style={level distance=30pt},
    every tree node/.style = {shape=rectangle, draw},
    every path/.style={-, thick},
    nature/.style={fill=none},
    baseline={(current bounding box.center)}
    ]
    \small
    \Tree[
    .\node[align=center](x){A}; \edge []; [
         .\node[align=center](x){A \\ BC}; 
        \edge []; [.\node[align=center](x){BC \\ DEFG};
            \edge []; \node[align=center](x){D \\ HI};
            \edge []; [.\node[align=center](x){EF \\ JKLM}; \edge[draw=none]; \node[draw=none](x){};]
        ]
    ]
    ]
    \end{tikzpicture}
    \caption{An example of a game with a single team, and its tree decomposition. Utilities are not shown as they are not relevant. Player nodes are shaded; the root (A) is a nature node. Information sets are joined by dotted lines. In each bag $U = U^\cabove \cup U^\cbelow$, the nodes in $U^\cabove$ are shown in the first row, and the nodes in $U^\cbelow$ are shown in the second row.}
    \label{fi:example}
\end{figure}
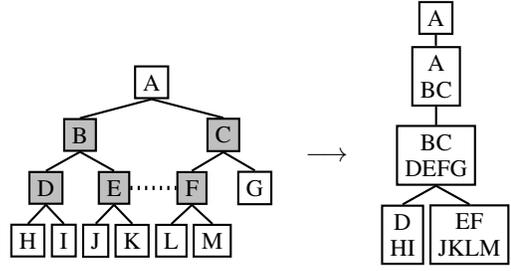

\begin{theorem}\label{th:tree-decomp}
The following is a tree decomposition $\mc J_\pmax$ of the constraint system~\eqref{eq:team-opt}.
\begin{enumerate}
\item The bags are the sets $U$ for $U^\cabove \in \mc J^{\cabove}_\pmax$, and the set $\{\textsc{Root}\}$ containing only the root node.
\item There is an edge between bags $U$ and $V$ if $U \cap V \ne \emptyset$
\end{enumerate}
\end{theorem}
\begin{proof}
We first check that $\mc J_\pmax$ is actually a tree. Since public states $U^\cabove \in \mc J_\pmax^{\cabove}$ are disjoint sets, edges in $\mc J_\pmax$ must span across different depths. Therefore, it suffices to show that for every $U \in \mc J_\pmax$, there is only (at most) one edge from $U$ to any bag at shallower depth. Let $u, v \in U^\cabove$, and $u'$ and $v'$ be the parents of $u$ and $v$ respectively. It suffices to show that $u'$ and $v'$ are in the same public state. Since $u$ and $v$ are in the same public state, there is a sequence of nodes $v_0, \dots, v_k \in U^\cabove$ such that $u = v_0 \sim v_1 \sim \dots \sim v_k = v$. Let $v_i'$ be the parent of $v_i$. Then by definition of $\sim$, we have $u' = v_0' \sim v_1' \sim \dots \sim v_k' = v'$.

Now by definition of public state,  any two nodes in the same infoset share the same public state, and for a nonterminal node $h \in U^\cabove$, $U$ by definition contains both $h$ and all its children. Therefore, every constraint is contained in some bag. Finally, every node in the game tree appears in at most two bags $U$ and $U'$, where $U'$ is the parent of $U$ in the tree decomposition, and these are connected by an edge. We have thus checked all the required properties of a tree decomposition, so we are done.
\end{proof}
Therefore, it remains only to construct the sets $X_U$ of locally feasible solutions on each $U$, and bound their sizes. The tree structure of $\mc J_\pmax$ is the {\em public tree} for the team.
and an example can be found in \Cref{fi:example}.
\begin{algorithm}[htb]
\caption{Constructing locally feasible sets }\label{al:local-feasible}
\begin{algorithmic}[1]
\For{each $U \in \mc J_\pmax$, in top-down order}
\If{$U=\{\textsc{Root}\}$} set $X_U \gets \{1\}$ and {\bf continue}\EndIf
\State let $U'$ be the parent of $U$ {\color{gray}\em (by construction, $U^\cabove \subset U'$)}

\State $X_{U^\cabove} \gets \{ \tilde{\vec x}_{U^\cabove} : \tilde{\vec x} \in X_{U'}\}$
\State $X_{U} \gets \emptyset$
\For{each locally feasible solution $\tilde{\vec x} \in X_{U^\cabove}$}
    \State $\mc I_{\tilde{\vec x}} \gets \{ I \in \mc I_\pmax : I \subseteq U^\cabove, I \cap \tilde{\vec x}^{-1}(1) \ne \emptyset\}$.
    \For{each partial strategy ${\vec a} \in \bigtimes_{I \in \mc I_{\tilde{\vec x}}} A(I)$} 
        \State $\tilde{\vec x}' \gets \vec 0 \in \zo^U$
        \For{each $h \in U^\cabove$ such that $\tilde x_h = 1$}
            \State $\tilde x'_h \gets 1$
            \If{$h \in I$ is a \pmax-node}
                $\tilde x'_{ha_I} \gets 1$
            \Else~{\bf for} each $a \in A(h)$ {\bf do} $\tilde x'_{ha} \gets 1$
            \EndIf
        \EndFor
        \State add $\tilde{\vec x}'$ to $X_U$.
    \EndFor
\EndFor
\EndFor
\end{algorithmic}
\end{algorithm}

\Cref{al:local-feasible} enumerates the locally feasible sets $X_U$ in each bag $U \in \mc J_\pmax$ iteratively starting from the root. It has runtime $O(W\sum_{U \in \mc J_\pmax} \abs{X_U})$, where $W$ is the treewidth\footnote{The degree of any bag $U$ in the decomposition is at most $\abs{U} \le W+1$, because each child of $U$ must contain a disjoint subset of $U^\cbelow$, so in \Cref{le:jtt-csp} we have $D = O(W)$ and we can therefore ignore $D$.} and a straightforward induction shows that it is correct. 
Therefore, chaining together \Cref{al:local-feasible} and \eqref{eq:generic-tree-decomposition-constraints} to obtain a description of both players' polytopes $\X$ and $\Y$, and solving the resulting bilinear saddle-point problem by dualizing the inner minimization and using linear programming, we obtain:
\begin{theorem}[Main Theorem]\label{th:main1}
Team correlated equilibria in extensive-form team games can be computed via a linear program of size 
\begin{align}
    O\qty(N + W\sum_{U \in \mc J_\pmax} \abs{X_U} + W\sum_{U \in \mc J_\pmin} \abs{Y_{U}}).
\end{align}
\end{theorem}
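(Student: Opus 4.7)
The plan is to assemble the three pieces developed so far in this section: the custom tree decompositions $\mc J_\pmax$ and $\mc J_\pmin$ validated by \Cref{th:tree-decomp}, \Cref{al:local-feasible} for enumerating the locally feasible sets $X_C$ and $Y_C$, and the generic \Cref{al:generic-tree-decomposition-constraints} for turning those local sets into a polytope description.

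First I would invoke \Cref{th:tree-decomp} together with \Cref{le:jtt-csp} and the construction in \Cref{al:generic-tree-decomposition-constraints}: applied to $X$ (the $\zo$-vectors satisfying~\eqref{eq:team-opt}), it yields a polytope whose projection onto the original coordinates $x \in \R^H$ is exactly $\mc X = \op{conv}(X)$. The description creates one variable $\lambda_{C, \tilde x} \ge 0$ per locally feasible $\tilde x \in X_C$, a per-bag normalization constraint, per-variable marginal constraints, and per-edge consistency constraints on the shared marginals. Since each bag $C$ has degree at most $W+1$ in $\mc J_\pmax$ (children correspond to disjoint subsets of $C^+$, as noted after \Cref{al:local-feasible}), each variable $\lambda_{C, \tilde x}$ appears in $O(W)$ constraints. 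Thus the LP describing $\mc X$ has $O(W \sum_{C \in \mc J_\pmax} \abs{X_C})$ nonzero entries in its constraint matrix, and \Cref{al:local-feasible} produces these $X_C$ correctly (by a straightforward induction on the BFS order, as already stated). The symmetric argument on $\pmin$ gives a description of $\mc Y$ of size $O(W \sum_{C \in \mc J_\pmin} \abs{Y_C})$.

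Next I would plug these descriptions into the saddle-point problem~\eqref{eq:bspp}. The objective $u(x, y) = \ev{x, Ay}$ is bilinear with at most $O(N)$ nonzero entries in $A$, since $A$ is supported on terminal nodes and $\abs{Z} \le N$. With $\mc Y$ described by linear constraints, the inner minimization $\min_{y \in \mc Y} \ev{x, Ay}$ (extended to also minimize over the $\lambda$ variables on the $\pmin$ side) is a linear program in the $\pmin$ variables whose right-hand side is constant in $x$ and whose objective is linear in $x$. Dualizing in the standard manner \cite{Luce57:Games} replaces the inner $\min$ by a $\max$ over dual multipliers; LP duality transposes the constraint matrix, so the number of nonzero entries contributed by the $\pmin$ polytope description is preserved. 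Combining with the outer maximization over the $\mc X$-description and the $O(N)$ coupling through $A$, we obtain a single linear program of total size $O(N + W\sum_{C \in \mc J_\pmax} \abs{X_C} + W\sum_{C \in \mc J_\pmin} \abs{Y_C})$, matching the claim.

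The step I expect to require the most care is the size accounting for the coupling between the outer primal variables and the dualized inner program. One must verify that the $O(N)$ bound on $A$ really translates into $O(N)$ nonzeros in the dualized LP, i.e., that the coupling block is not inflated when $y(h)$ for a terminal $h$ is rewritten via the $\lambda_{C, \tilde x}$ marginals. This is immediate because each terminal $h$ lies in exactly one bag $C$ of $\mc J_\pmin$ (the unique leaf bag containing $h$), so the marginal constraint expressing $y(h)$ is a single equation that can be used directly: the nonzeros of $A$ are absorbed into the coupling with these $\lambda$ marginals rather than multiplied against all of them. Once this bookkeeping is in place, the theorem follows by summing the three components.
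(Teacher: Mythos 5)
Your proposal is correct and follows essentially the same route as the paper, which likewise obtains the theorem by chaining \Cref{al:local-feasible} and \Cref{al:generic-tree-decomposition-constraints} (using \Cref{th:tree-decomp} and \Cref{le:jtt-csp} for correctness, and the $O(W\sum_C \abs{X_C})$ size bound already established for \Cref{al:generic-tree-decomposition-constraints}) and then dualizing the inner minimization of~\eqref{eq:bspp}. Your extra bookkeeping on the coupling block is consistent with the paper's observation that each node appears in at most two bags, so the $O(N)$ contribution of $A$ is preserved.
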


In \Cref{se:full-program}, we include the full constraint system for a player's strategy space in both the general case and, as an example, the game in \Cref{fi:example}.

For intuition, we briefly discuss the special case where the team has perfect recall. In this case, after applying the assumptions without loss of generality, the team tree is (up to some redundancies) exactly the sequence-form tree for the team. Every public state $U^\cabove \in \mc J^\cabove_\pmax$ has exactly one node, namely the information set node of the lone information set $I$ at $U^\cabove$, and the local feasible solutions $\tilde{\vec x} \in X_U$ correspond to sequences $Ia$ for actions $a$ at $I$. Thus, the LP given by \Cref{th:main1} is (again up to some redundancies) exactly the sequence-form LP~\cite{Koller94:Fast}.

\section{Bounding the Sizes of Locally Feasible Sets}
On its own, \Cref{th:main1} is not very useful: we still need to bound the sizes $\abs{X_U}$ and $\abs{Y_{U}}$. In the worst case, we will not be able to derive small bounds due to {\sf NP}-hardness. However, we will now show some cases in which this result matches or tightens known results. In this section, we again take the perspective of a single team \pmax. %

\subsection{Low Uncommon External Information}

Call a set $S \subseteq U^\cabove$ {\em reachable} if there is a pure plan $\vec x \in X$ such that $S$ is exactly the set of nodes in $U^\cabove$ reached by $\vec x$, i.e., $S = \{ h \in U^\cabove : x_h = 1\}$. Let $w(U)$ be the largest size of any reachable subset of $U^\cabove$. If $w(U)$ is small for every $U$, then we can bound the size of the linear program:
\begin{theorem}\label{co:low-external-info-loss}
Team correlated equilibria in extensive-form team games can be computed via a linear program of size 
\begin{align}
    O\qty(N + W \sum_{U \in \mc J_\pmax \cup \mc J_\pmin} \abs{U}^{w(U)}) \le O(NW^{w+1}),
\end{align}
where we call $w := \max_{U \in \mc J_\pmax \cup J_\pmin} w(U)$ the {\em reachable width} of $\mc J_\pmax$.
\end{theorem}

\begin{proof}
Let $\vec x \in X$, and let $U \in J_\pmax$. Let $$U^* = \{ha \in U^\cbelow : P(h) = \pmax\} \cup \{h \in U^\cabove : P(h) \ne \pmax\}.$$
Then the following are true:
\begin{enumerate}[(1)]
    \item $\vec x_{U^*}$ uniquely determines $\vec x_{U}$: for a \pmax-node $h \in U^\cabove$, we have $x_h = \sum_a x_{ha}$, and for a non-\pmax-node $h \in U^\cabove$, we have $x_h = x_{ha}$ for all $a$.  \label{it:unique}
    \item Let $h, h' \in U^*$, and suppose $x_h = x_{h'} = 1$. Then $h \land h'$ is not a $\pmax$-node: otherwise, a pure strategy playing to both $h$ and $h'$ would have to select two actions at $h \land h'$.
\end{enumerate}

Thus, we have, $\abs{X_U} \le \binom{\abs{U^*}}{\le w(U)} \le \abs{U}^{w(U)}$, where $\binom{n}{\le k} := \sum_{j=0}^k \binom{n}{k}$ is the number of ways to pick at most $k$ items from a set of size $n$. The theorem follows.
\end{proof}

This bound is applicable in any game, but, again due to {\sf NP}-hardness in the worst case, there will exist games in which $w = \Theta(N)$, in which case the bound will be exponentially bad and we would rather use the trivial bound $\abs{X_U} \le 2^W$.  

We now state some straightforward properties of reachable sets $S \subseteq U^\cabove$. 
\begin{enumerate}
    \item Every pair of nodes $h \ne h' \in S$ has a different sequence $s_\pmax(h)$. That is, information distinguishing nodes in $S$ is known to at least one player on the team.
    \item $S$ is a subset of a public state. That is, information distinguishing nodes in $S$ is not common knowledge for the team.
    \item For every pair of nodes $h \ne h' \in S$, and every infoset $I \prec h, h'$, the two nodes $h$ and $h'$ must follow from the same action $a$ at $I$, that is, $Ia \prec h, h'$. That is, the information distinguishing nodes in $S$ was not generated by players on the team.
    \item If the team has common external information, then $U^\cabove$ has size $1$ (and its single element is a \pmax-node by assumption), and thus $S$ also can also have size at most $1$.
\end{enumerate}
Conditions~1 and~3 are effectively the negation of the definition of common external information, with the role of infosets $I$ in that definition taken by public states $U$. Thus, in some sense, the reachable width measures the {\em amount of uncommon external information} in the game.

Therefore, \Cref{co:low-external-info-loss} interpolates nicely between the extremes of common external information (which, by Property 4, has reachable width $1$), and the game used in the {\sf NP}-hardness reduction (\Cref{th:hardness}), which can have reachable width $\Theta(N)$. 

Using reachable sets, as opposed to merely arguing about the treewidth $W$ and bounding $\abs{X_U} \le 2^{W+1}$, is crucial in this argument: while Items~1, 2, and~4 in the above discussion follow for unreachable sets as well, Item~3 is false for unreachable sets. Thus, the treewidth $W$ cannot be interpreted as the amount of uncommon {\em external} information, while the reachable width $w$ can. In \Cref{se:counterexample}, we show an example family of games in which the tree decomposition has $O(1)$ reachable width (and thus low uncommon external information), but $\Theta(N)$ treewidth. 

\subsection{Public Actions}
Suppose that our game has the following structure for \pmax: \pmax has $n$ players. Nature begins by picking {\em private types} $t_i \in [t]$ for each player $i \in \pmax$, and informs each player $i$ of $t_i$. From that point forward, all actions are public, the player to act is also public, and no further information is shared between teams. We call such games {\em public action}. For example, poker has this structure.

Assume, again without loss of generality, that the branching factor of the game is at most $2$---this assumption can be satisfied by splitting decision nodes as necessary, and increases the number of public states by only a constant factor.

Consider a public state $U^\cabove \in \mc J^{\cabove}_\pmax$ at which a player $i \in \pmax$ picks one of two actions $L$ or $R$. Since all actions are public, the set of reachable subsets of $U$ can be described as follows: for each type $t_i \in [t]$, $i$ chooses to either play $L$ in $U^\cabove$, play $R$ in $U^\cabove$, or not play to reach $U^\cabove$ at all. For each other player $i' \ne i \in \pmax$, $i'$ choosees, for each type $t_i \in [t]$, whether or not to play to reach $U^\cabove$. There are a total of $3^t 2^{(n-1)t}$ choices that can be made in this fashion, so we have $\abs{X_U} \le 3^t 2^{t(n-1)}$. Thus, we have shown:
\begin{corollary}\label{co:public-actions}
Team correlated equilibria in extensive-form team public-action games with at most $t$ types per player, and $n$ players on each team can be computed via a linear program of size 
$
O(3^t 2^{t(n-1)}NW) \le 2^{O(tn)} N
$.
\end{corollary}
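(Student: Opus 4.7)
The plan is to apply the main size bound (Theorem~\ref{th:main1}) and then bound $\abs{X_C}$ for each public node $C \in \mc J_\pmax$ using the public-action structure, together with the symmetric bound for $\pmin$. Since Theorem~\ref{th:main1} gives size $O(N + W\sum_{C}\abs{X_C} + W\sum_{C}\abs{Y_C})$, once we show $\abs{X_C} \le 3^t 2^{t(n-1)}$ at every public node, summing over the (at most $N$) public nodes in the binary game tree yields the claim.

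First I would handle the binary-branching reduction. Any extensive-form game can be transformed into a strategically equivalent one with branching factor $2$ by splitting each decision node of branching factor $b$ into a binary subtree of depth $\ceil{\log_2 b}$, inflating $N$ by at most a factor of $O(\log B)$. Care needs to be taken so that the thin-infoset assumption is preserved and so that the public-action structure survives the transformation (the split nodes inherit the acting player and are still public). This is where the extra $\log B$ factor appears in the bound, and where the ``doubling'' footnote is actually used.

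Next, I would enumerate $X_C$ at a single public node $C^-$ where some $i \in \pmax$ acts with actions $\{L, R\}$. Because actions are public from the point types are drawn onward, membership of a node $h \in C^-$ in the support of a pure plan $x$ is determined purely by the profile of private types the team members drew. So a reachable $S \subseteq C^+$ is specified by (i) for each of $i$'s $t$ types, one of three choices $\{L, R, \text{don't reach } C^-\}$, giving $3^t$ possibilities for player $i$, and (ii) for each of the remaining $n-1$ team members and each of that member's $t$ types, a binary choice of whether to play the (forced, public) sequence that leads to $C^-$, giving $2^{t(n-1)}$ possibilities. Applying Observation~\ref{it:unique} from the previous subsection, each reachable $S$ uniquely determines $x_C$, so $\abs{X_C} \le 3^t 2^{t(n-1)}$. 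Public nodes at which $\pmax$ does not act can be bounded by their parent's size, as in the previous subsection, so the uniform bound holds for every $C \in \mc J_\pmax$.

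Finally, I would combine: the number of public nodes is at most $N$, the bound is symmetric for $\pmin$, and the initial transformation contributes $O(\log B)$, so Theorem~\ref{th:main1} gives a linear program of size $O(3^t 2^{t(n-1)} N W \log B) \le \tilde O(3^t 2^{t(n-1)} N W)$. The main obstacle I anticipate is the careful bookkeeping of the binary reduction: I need to verify that after splitting, the newly introduced decision nodes still admit a public-action interpretation (so that the per-bag $3^t 2^{t(n-1)}$ count is preserved) and that the public tree $\mc J^-_\pmax$ of the modified game has $O(N \log B)$ bags rather than something worse; everything else is a direct combinatorial count.
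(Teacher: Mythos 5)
Your proposal is correct and follows essentially the same route as the paper: reduce to branching factor $2$ at a cost of $O(\log B)$, count reachable subsets of $C^+$ as $3^t$ choices for the acting player's types times $2^{t(n-1)}$ for the other members' types, use the fact that $x_{C^+}$ determines $x_C$ to get $\abs{X_C} \le 3^t 2^{t(n-1)}$, and plug into \Cref{th:main1}. Your extra care about preserving the thin-infoset and public-action structure under the binary split is a reasonable elaboration of what the paper relegates to a footnote, not a different argument.
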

This bound is much tighter than the bound given by the previous section---we have $W, w = O(t^n)$, so \Cref{co:low-external-info-loss} is subsumed by the trivial bound $\abs{X_U} \le 2^{W+1} = 2^{O(t^n)}$. It is also again in some sense tight: the game used in \Cref{th:hardness} has public actions and $t = \Theta(N)$ types for both players.

\section{Experiments}
\begin{table*}[!tb]
\definecolor{light-gray}{gray}{0.65}
\newcommand{\G}[1]{\color{light-gray}$#1$}

\newcommand{\tablesize}{\footnotesize}\setlength{\tabcolsep}{5pt}
\newcommand{\mymidrule}{\cmidrule(lr){1-2} \cmidrule(lr){3-5} \cmidrule(lr){6-8}  \cmidrule(lr){9-10} \cmidrule(lr){11-13}}
\centering{
\tablesize
\begin{tabular}{lr rr rr rr rr rrr}
	\toprule
	 && \multicolumn{3}{c}{Team \pmax} & \multicolumn{3}{c}{Team \pmin} &&& \multicolumn{3}{c}{Runtime}\\
		{Game} & {$\abs{Z}$} & {\#Seq} & {$\sum \abs{X_C}$} & {Ratio} & {\#Seq} & {$\sum \abs{X_C}$} & {Ratio} & {Value} & {LP Size} & {Ours} & {FCGS-21} & {ZAC-20} \\
	
	\mymidrule
	
	21K3  & 78 & 91 & 351 & 3.9 & 25 & 25 & 1 & .0000 & 2386 & {\bf 0.01s} & {\bf 0.01s} & \G<0.7s \\
	
    21K4  & 312 & 177 & 1749 & 9.9 & 33 & 33 & 1 & -.0417 & 18810 & 0.02s & {\bf 0.01s} & 0.7s \\
    
    21K6  & 1560 & 433 & 52669 & 122 & 49 & 49 & 1 & -.0236 & 1150838 & 1.37s & --- & 2s\\
    
    21K8  & 4368 & 801 & 1777061 & 2218 & 65 & 65 & 1 & -.0193 & 62574750 & 2m24s & \G<4.96s  & 4s\\
    
    21K12  & 17160 & 1873 &  && 97 & 97 & 1 &  & & oom & {\bf 4.96s} & 10s \\
    
    22K5  & 3960 & 611 & 23711 & 39 & 3083 & 521 & 5.9 & -.0368 & 426297 & 1.27s & --- & --- \\
    
    31K5  & 3960 & 2611 & 974470 & 373 & 81 & 81 & 1 & -.0300 & 21106658 & 3m20s & --- & ---\\

    \mymidrule

    21L133* & 6477 & 2725 & 17718 & 6.5 & 457 & 703 & 1.5 & .2148 & 126075 & {\bf 0.49s} & 2m23s & 1h22m \\
    
    21L143 & 20856 & 6377 & 115281 & 18 & 801 & 1225 & 1.5 & .1073 & 1195766 & {\bf 7.66s} & 1h07m & \G>1h22m \\
    
    21L151* & 10020 & 6051 & 130359 & 22 & 1001 & 1531 & 1.5 & -.0192 & 1425583 & 8.02s & --- & 4h24m \\
    
    21L153 & 51215 & 12361 & 757884 & 61 & 1241 & 1891 & 1.5 & .0240 & 11234573 & {\bf 3m50s} & \G>1h07m & \G>4h24m \\
     
    21L223 & 8762 & 5765 & 21729 & 3.8 & 1443 & 3123 & 2.2 & .5155 & 112305 & {\bf 0.23s} & 3m29s & --- \\
    
    21L523 & 775148 & 492605 & 2042641 & 4.1 & 123153 & 305835 & 2.5 & .9520 & 10507398 & {\bf 3m27s} & \G>3m29s & --- \\
    
    22L133 & 80322 & 9781 & 55788 & 5.7 & 9745 & 52053 & 5.3
 & .1470 & 785032 & {\bf 9.26s} & \G>2m23s & \G>1h22m \\
     
    31L'132* & 3834 & 3991 & 46122 & 12 & 151 & 235 & 1.6 & -.1333 & 369673 & 2.97s & --- & 1m08s \\
    
    31L'133* & 8898 & 5644 & 69642 & 12 & 151 & 235 & 1.6 & -.1457 & 577942 & 8.88s & --- & 21m04s \\
    
    31L133 & 80322 & 42361 & 703390 & 17 & 1633 & 2479 & 1.5 & .3470 & 6326658 & {\bf 1m52s} & \G>2m23s & \G>1h22m \\

    \mymidrule
    
    21D3 & 13797 & 6085 & 74635 & 12 & 1021 & 2686 & 2.6 & .2840 & 500665 & {\bf 2.03s} & 1m12s & --- \\
    
    21D4 & 262080 & 87217 & 3521398 & 40 & 10921 & 29749 & 2.7 & .2843 & 32755273 & {\bf 9m59s} & \G>6h & --- \\
    
    22D3 & 331695 & 36784 & 402669 & 11 & 36388 & 381218 & 11 & .2000 & 5275196 & {\bf 1m00s} & \G>6h & --- \\
    
    33D2 & 262080 & 31545 & 178123 & 5.6 & 29433 & 152286 & 5.2 & .0721 & 1862387 & 15.7s & --- & --- \\
    
    42D2 & 262080 & 83969 & 761600 & 9.1 & 9491 & 30150 & 3.2 & .2647 & 4751391 & 1m24s & --- & --- \\
    
    51D2 & 262080 & 185905 & 2537927 & 14 & 3459 & 7207 & 2.1 & .3333 & 17635669 & 8m21s & --- & --- \\

    \mymidrule
    
    21GL & 1296 & 2713 & 8572 & 3.2 & 934 & 2158 & 2.3 & .2524 & 34310 & {\bf 0.06s} & 0.33s & --- \\
    
    21G & 1296 & 3601 & 11344 & 3.2 & 1630 & 3766 & 2.3 & .2534 & 47810 & {\bf 0.08s} & 1.18s & --- \\
    
    31GL & 7776 & 21082 & 71278 & 3.4 & 3502 & 7582 & 2.2 & .2803 & 249374 & 0.66s & --- & --- \\
    
    31G & 7776 & 30250 & 102118 & 3.4 & 8758 & 18994 & 2.2 & .2803 & 378410 & {\bf 0.79s} & \G>1.18s & ---\\
    
    32GL & 46656 & 71722 & 241030 & 3.4 & 34393 & 104908 & 3.1 & .0000 & 1095542 & 3.19s & --- & --- \\
    
    32G & 46656 & 160498 & 538546 & 3.4 & 102097 & 310720 & 3.0 & .0000 & 2620502 & 7.86s & --- & ---\\

    \bottomrule
	\end{tabular}
}

\caption{Experiments. ``oom'' indicates that our algorithm exhausted the memory limit of 16GB. ``---'' means that the respective paper did not report a runtime for that game. Our runtimes list only the time taken by the LP solver; the time to construct the LP itself is smaller in all instances. LPs are solved with the barrier algorithm in Gurobi 9.1 with crossover and presolver both off (the latter was needed for numerical stability), using 4 CPU cores. The hardware is comparable to that used by FCGS-21, and weaker than that used by ZAC-20. *: The Leduc games in ZAC-20 were constructed using a different implementation than ours and FCGS-21, and thus the reported game sizes differ despite the underlying games being the same. Our implementation matches FCGS-21.
}
\label{ta:experiments}
\end{table*}

We conducted experiments to compare our algorithm to the prior state of the art, namely the algorithms of \citet{Farina21:Connecting} (``FCGS-21'' in the table) and \citet{Zhang20:Computing} (``ZAC-20'' in the table). Experimental results can be found in \Cref{ta:experiments}. The experiments table has the following syntax for identifying games: $mn$G$p$, where $m$ and $n$ are the sizes of teams \pmax and \pmin respectively, G is a letter representing the game, and $p$ represents parameters of the game, described below. All games described are variants of common multi-player games in which teams are formed by colluding players, who we assume will evenly split any reward. For example, ``31K5'' is Kuhn poker (K) with $\abs{\pmax} = 3, \abs{\pmin} = 1$, and $5$ ranks. Where relevant, \pmax consists of the first $m$ players, and \pmin consists of the remaining $n$ players. %
\begin{itemize}
    \item $mn$K$r$ is {\em Kuhn poker} with $r$ ranks. %
    \item $mn$L$brc$ is {\em Leduc poker}. $b$ is the maximum number of bets allowed in each betting round. $r$ is the number of ranks. $c$ is the number of suits (suits are indistinguishable). In variant L', team \pmax is not allowed to raise.
    \item $mn$D$n$ is {\em Liar's Dice} with one $n$-sided die per player. 
    In Liar's Dice, if both teams have consecutive players, then the game value  is triviallyj $0$. Therefore, in these instances, instead of \pmin being the last $n$ players, the last $2n$ players alternate teams---for example, in 42D, the teams are $\pmax = \{ 1, 2, 3, 5\}$ and $\pmin = \{4, 6\}$.
    \item $mn$G and $mn$GL are {\em Goofspiel} with 3 ranks. GL is the limited-information variant.
\end{itemize}
These are the same games used by \citet{Farina21:Connecting} in their work; we refer the interested reader to that paper for detailed descriptions of all the games. In many cases, teams either have size $1$ or have common external information (the latter is always true in Goofspiel). In these cases, it would suffice, after inflation, to use the standard sequence-form representation of the player's strategy space~\cite{Koller96:Efficient}.  However, we run our technique anyway, to demonstrate how it works in such settings.

Our experiments show clear state-of-the-art performance in all tested cases in which comparisons could be made, except Kuhn Poker. In Kuhn Poker, the number of types $t$ is relatively large compared to the game size, so our technique scales poorly compared to prior techniques.

\section{Conclusions}

In this paper, we devised a completely new algorithm for solving team games that uses tree decomposition of the constraints representing each team's strategy to reduce the number and degree of constraints required for correctness.  
Our approach has the bags of the tree decomposition  correspond  to team-public  states---that  is,  minimal sets of  nodes  (that is, states of the team) such  that,  upon reaching the set,  it  is  common  knowledge  among  the  players  on  the  team  that  the set has  been reached.  
Our algorithm reduces the problem of solving team games to a linear program of size $NW^{w+1}$, where $w$ is a parameter that depends on the amount of uncommon external information and $W$ is the treewidth. In {\em public-action games}, we achieve a tighter bound $2^{O(nt)} N$ for teams of $n$ players with $t$ types each.
Our algorithm is based on a new way to write a custom, concise tree decomposition, and its fast run time does not rely on low treewidth.
We show via experiments on a standard suite of games that our algorithm achieves state-of-the-art performance on all benchmark games except Kuhn poker. We also present, to our knowledge, the first experiments for this setting where more than one team has more than one member.

Compared to the techniques of past papers~\cite{Celli18:Computational,Zhang20:Computing,Farina21:Connecting}, our technique has certain clear advantages and disadvantages.

\begin{enumerate}
    \item {\em Advantage}: Unlike prior techniques, ours does not require solving integer programs for best responses. While integer programming and normal-form double oracle can have reasonable practical performance, neither has worst-case performance bounds. In contrast, we are able to derive worst-case performance bounds.
    \item {\em Advantage}: Since our algorithm describes the polytope of correlated strategies directly, we get equilibrium finding in correlated strategies for free---instead of, say, having to run a double oracle algorithm, which, like integer programming, has no known polynomial convergence bound despite reasonable practical performance in some cases.
    \item {\em Advantage}: In domains where there is not much uncommon external information (i.e., $w(U)$ or $t$ is small), our program size scales basically linearly in the game size. Thus, our algorithm is able to tackle some games with $10^5$ sequences for both players.
    \item {\em Disadvantage}: Our algorithm scales poorly in games with high uncommon external information. In the experiments, this can  be seen in the Kuhn poker instances.
    \end{enumerate}

\section*{Acknowledgements}
This material is based on work supported by the National Science Foundation under grants IIS-1718457, IIS-1901403, and CCF-1733556, and the ARO under award W911NF2010081. We also thank Gabriele Farina, Andrea Celli, and our anonymous reviewers at AAAI and AAAI-RLG for suggesting improvements to the writing.

\bibliography{dairefs}

\newpage\onecolumn
\appendix
\section{Proofs}
\subsection{\Cref{th:hardness}}
\begin{proof}
We reduce from 3-SAT. Let $\phi$ be a 3-SAT instance, and consider the following game for a single team $\pmax = \{\text{P1}, \text{P2}\}$. First, nature chooses a clause uniformly at random. Then, P1 is told the clause and chooses a variable within that clause. Finally, P2 is told the variable that P1 chose, but {\em not} the clause, and chooses an assignment (true or false) to that literal. The team wins if P2's assignment satisfies the clause. 

Clearly, the game has size linear in the size of the instance (six terminal nodes per clause). If $\phi$ is satisfiable, the team can always win by having P2 play a satisfying assignment and P1 pick a positive literal in each clause. If $\phi$ is not satisfiable, then no matter what P2's strategy is, the team must lose with probability at least $1/m$ where $m$ is the number of clauses. Therefore, this game has value at least $1-1/2m$ if and only if $\phi$ is satisfiable. 
\end{proof}
\subsection{\Cref{le:jtt-csp}}
\begin{proof}
We repeat the constraint system here for clarity.
\begin{align}
    \vec x_U &= \sum_{\tilde{\vec x} \in X_U} \lambda^U_{\tilde{\vec x}} \tilde{\vec x} &&\qq{$\forall$ bags $U \in \mc J$}\label{eq:apx2}\\
    \sum_{\substack{\tilde{\vec x} \in X_U : \\ \tilde{\vec x}_{U \cap V} = \tilde{\vec x}^*}} \lambda^U_{\tilde{\vec x}} &= \sum_{\substack{\tilde{\vec x} \in X_V : \\ \tilde{\vec x}_{U \cap V} = \tilde{\vec x}^*}} \lambda^V_{\tilde{\vec x}} &&\qq{$\forall$ edges $(U, V)$ in $\mc J$, locally feasible   $\tilde{\vec x}^* \in X_{U \cap V}$}\label{eq:apx3}
\end{align}
Since the $\vec \lambda^U$s are local distributions with consistent marginals across the edges of the tree decomposition, the junction tree theorem gives a distribution $\vec \Lambda$ on $\zo^n$ such that, for every bag $U$, the marginal of $\vec \Lambda$ on $U$ is exactly $\vec \lambda^U$. By definition, we have $\vec x = \E_{\hat{\vec x} \sim \vec \Lambda} \hat{\vec x}$. Consider some $\hat{\vec x} \in \{0,1\}^n$ in the support of $\vec \Lambda$. Then $\hat{\vec x}_U$ is in the support of $\vec \lambda^U$, which in particular means that $\hat{\vec x}$ satisfies all constraints whose variables are fully contained in $U$. Since this is true of every bag $U$, and every constraint has its variables fully contained in some bag $U$, we have $\hat{\vec x} \in X$, and thus $\vec x \in \mc X$, as needed.

We now bound the size (number of total terms) of the system. 
\begin{enumerate}
    \item The number of appearances of each $x_i$ is at most the total sum of all bag sizes, which is $O(W\abs{\mc J})$. 
    \item Each variable $\lambda^U_ {\tilde{\vec x}}$ appears at most $W$ times in constraints of the form \eqref{eq:apx2}, one for each $i$ such that $\tilde{x}_i = 1$. This accounts for $O(W \sum_{U \in \mc J} \abs{X_U})$ terms.
    \item In each edge $(U, V)$, each $\lambda^U_{\tilde{\vec x}}$ and $\lambda^V_{\tilde{\vec x}}$ appears exactly once in constraints of the form~\eqref{eq:apx3}. Thus, each $\lambda^U_{\tilde{\vec x}}$ appears an additional $D$ times, where $D$ is the maximum degree of any bag $U$ of $\mc J$. This accounts for $O(D \sum_{U \in \mc J} \abs{X_U})$ terms.
\end{enumerate}
Adding these up gives the stated result.
\end{proof}
\newpage
\section{Example Game Where Reachable Width is Much Smaller than Width}\label{se:counterexample}

\begin{figure*}[htb]
    \centering
    \begin{tikzpicture}[
    level 1+/.style={level distance=40pt, sibling distance=5pt},
    every level 0 node/.style={draw, shape=rectangle, fill=black},
    every level 1 node/.style={draw, shape=rectangle, fill=black},
    every level 2 node/.style={draw, shape=rectangle, fill=none},
    every level 3 node/.style={draw, shape=rectangle, fill=black},
    every level 4 node/.style={draw, shape=rectangle, fill=none},
    every path/.style={-, thick},
    nature/.style={fill=none},
    off/.style={draw=none, fill=none},
    baseline={(current bounding box.center)},
    ]
    
    \Tree[
        .\node[nature]{};
            [ .{} 
                [ .\node[nature](l1){}; 
                    [ .\node[](li){}; {} {} ] \edge[draw=none]; [ .\node[](l1b){}; {} {} ]
                ]
                [ .\node[nature](l2){}; 
                    [ .{} {} {} ] \edge[draw=none]; [ .\node[](l2b){}; {} {} ]
                ]
                [ .\node[nature](l3){}; 
                    [ .{} {} {} ] \edge[draw=none]; [ .\node[](l3b){}; {} {} ]
                ]
            ] 
            \edge[draw=none]; [ .\node[off]{}; \edge[draw=none]; [ .\node[off]{}; \edge[draw=none]; \node[off]{\quad\quad\quad\quad}; ] ]
            [ .{} 
                [ .\node[nature](r1){}; 
                    \edge[draw=none]; [ .\node[](r1b){}; {} {} ] [ .{} {} {} ] 
                ]
                 [ .\node[nature](r2){}; 
                    \edge[draw=none]; [ .\node[](r2b){}; {} {} ] [ .{} {} {} ] 
                ]
                 [ .\node[nature](r3){}; 
                    \edge[draw=none]; [ .\node[](r3b){}; {} {} ] [ .\node[](ri){}; {} {} ] 
                ]
            ]
    ]
    \draw[] (l1.south) -- (r1b.north);
    \draw[] (l2.south) -- (r2b.north);
    \draw[] (l3.south) -- (r3b.north);
    
    \draw[] (r1.south) -- (l1b.north);
    \draw[] (r2.south) -- (l2b.north);
    \draw[] (r3.south) -- (l3b.north);
    
    \draw[dotted, ultra thick, -, fill=black, draw=black] (li) -- (l3b);
    \draw[dotted, ultra thick, -] (ri) -- (r1b);
    \end{tikzpicture}
    \caption{The game tree of the family of games in \Cref{se:counterexample}, for $k=3$. Information sets are connected by dotted lines. Team nodes are black, and nature and terminal nodes are white.}
    \label{fi:counterexample}
\end{figure*}
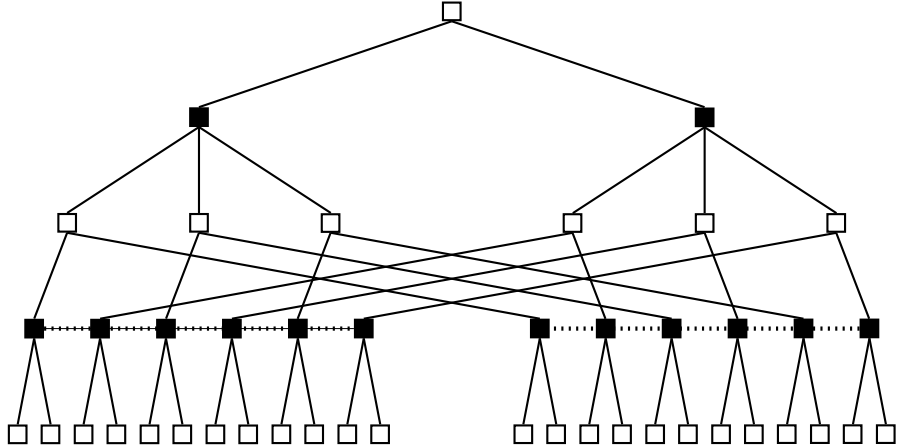
Consider the following family of games involving a team of two players $\pmax = \{\text{P1}, \text{P2}\}$. Nature picks a random type $t_1$ for P1. P1 learns $t_1$ and picks an action $a \in [k]$. Nature then picks a type $t_2 \in \zo$ for P2, possibly conditional on P1's type and/or action. P2 learns $t_2$, but {\em not} the action $a$ played by P1, and must then play one of two actions. We do not specify the rewards as they are not relevant. The game tree for $k=3$ is depicted in \Cref{fi:counterexample}.

In this game, the two P2 infosets, each of size $2k$, are public states. Therefore, the tree decomposition defined in this paper has treewidth $6k-1$ (when $k=3$, the treewidth is $17$). However, there are only two bits of external information, namely the two types $t_1, t_2 \in \zo$. Indeed, using reachable width instead of treewidth fixes this issue: the reachable width of each public state in this game is only $2$ (regardless of the value of $k$), which accurately reflects the amount of uncommon external information. 

We could have written this counterexample without nature picking the type $t_2$, but then the game tree would be rewriteable by switching the order of P1 and P2's decisions, and having P2 decide her move at the root of the tree, upon which the game would become perfect information. Adding the second layer of nature nodes prevents such a straightforward rewriting.

\newpage
\section{Full Constraint System and Example}\label{se:full-program}
In this section, we state the constraint system for the team's decision space induced by chaining \Cref{al:local-feasible} with \eqref{eq:generic-tree-decomposition-constraints}, and give an example using the game in \Cref{fi:example}.
\subsection{General Constraint System}
For each public state $U^\cabove \in \mc J_\pmax^\cabove$, and for each reachable subset $S \subseteq U^\cabove$, let $I_S$ be the set of \pmax-infosets intersecting $S$. For each joint action $\vec a \in \bigtimes_{I \in \mc I_S}A(I)$, let $S \vec a \subseteq U^\cbelow$ be the set of children of $S$ reached if the team plays joint action $\vec a$. That is,
$$S \vec a = \{ ha_I : h \in S, h \in I \in {\mc I}_S\} \cup \{ ha : h \in S, P(h) \ne \pmax, a \in A(h)\}.$$
Let $\tilde{\vec x}^{S\vec a} \in \zo^U$ be the indicator function on $S \cup S\vec a$, i.e.,
$\tilde{\vec x}^{S\vec a}_i = \vec 1\{i \in S \cup S\vec a\}$.  For each such set $S \cup S \vec a \subseteq U$, we introduce a variable $\lambda_{S \vec a}$ representing the local probability mass on $\tilde{\vec x}^{S \vec a}$.

In the root bag\footnote{The condition that $\vec \lambda$ is a local distribution in all other bags will follow from the marginalization constraints between bags.} $\{\textsc{Root}\}$, we introduce the constraint that $\vec \lambda$ is a feasible local distribution in this bag and $x_\textsc{Root}$ is consistent with it, which simply means that $x_\textsc{Root} = \lambda_{\{\textsc{Root}\}} = 1$.

In every bag, we introduce the constraint that $\vec x$ is consistent with the choice of the local distributions\footnote{The constraints for $h \in U^\cabove$ again follow from marginalization constraints.}:
\begin{align}
    x_h = \sum_{\substack{ \text{reachable }S \vec a \subseteq U \\ h \in S \vec a}} \lambda_{S \vec a} \qq{$\forall\ h \in U^\cbelow$.}
\end{align}
Finally, if $U$ is not the root bag, for each nonempty\footnote{The constraints for empty $S$ are redundant.} reachable subset $S \subseteq U^\cabove$, we introduce the constraint
\begin{align}
    \sum_{\substack{ \text{reachable }S' \vec a'  \subseteq U' \\ S = S' \vec a' \cap U }} \lambda_{S' \vec a'} &= \sum_{\vec a \in \bigtimes_{I \in \mc I_S}A(I)} \lambda_{S \vec a}
\end{align}
where $U'$ is the parent of $U$. This constraint states that, for each reachable subset $S \subseteq U^\cabove$, the ``marginal'' onto $S$ in both directions is consistent: from above, this marginal is constructed by summing the probabilities of all $S' \vec a'$ in the parent node such that the marginal of $S' \vec a'$ onto $U$ is exactly $S$; from below, it is constructed by summing over all joint actions $\vec a$ available to the team at $S$. As stated in the main paper, in the case where the team has perfect recall, the left-hand side of this constraint will have exactly one term, and we recover, up to some redundancies, the sequence-form polytope of the team.

\subsection{Example: Constraint System for the Game in \Cref{fi:example}}
\newcommand{\longline}{\hphantom{\quad\lambda_\text{BCDF} + \lambda_\text{BCDG}}}
\begin{align}
\minibox[c,frame]{A} \text{ and } \minibox[c,frame]{A \\ BC}&~~~~\begin{aligned}
     \longline \llap{$\lambda_\text{A}$} &= \lambda_\text{ABC} = x_\text{A} = x_\text{B} = x_\text{C} = 1 \\
\end{aligned}\\
\minibox[c,frame]{BC \\ DEFG}&\left\{\begin{aligned}
    \longline \llap{$\lambda_\text{ABC}$}
&= \lambda_\text{BCDF} + \lambda_\text{BCDG} + \lambda_\text{BCEF} + \lambda_\text{BCEG} \\
    x_\text{D} &= \lambda_\text{BCDF} + \lambda_\text{BCDG} \\
    x_\text{E} &= \lambda_\text{BCEF} + \lambda_\text{BCEG} \\
    x_\text{F} &= \lambda_\text{BCDF} + \lambda_\text{BCEF} \\
    x_\text{G} &= \lambda_\text{BCDG} + \lambda_\text{BCEG}
\end{aligned}\right.\\
\minibox[c,frame]{D \\ HI}&\left\{\begin{aligned}
   \quad \lambda_\text{BCDF} + \lambda_\text{BCDG} &= \lambda_\text{DH} + \lambda_\text{DI}\\
    x_\text{H} &= \lambda_\text{DH}\\
    x_\text{I} &= \lambda_\text{DI}
\end{aligned}\right.\\
\minibox[c,frame]{EF \\ JKLM}&\left\{\begin{aligned}
\longline \llap{
    $\lambda_\text{BCDF}$} &= \lambda_\text{FL} + \lambda_\text{FM}\\
    \lambda_\text{BCEF} &= \lambda_\text{EFJL} + \lambda_\text{EFKM}\\
    \lambda_\text{BCEG} &= \lambda_\text{EJ} + \lambda_\text{EK}\\
    x_\text{J} &= \lambda_\text{EFJL} + \lambda_\text{EJ}\\
    x_\text{K} &= \lambda_\text{EFKM} + \lambda_\text{EK}\\
    x_\text{L} &= \lambda_\text{EFJL} + \lambda_\text{FL}\\
    x_\text{M} &= \lambda_\text{EFKM} + \lambda_\text{FM}\\
\end{aligned}\right.
\end{align}

\end{document}